\documentclass[10pt, onecolumn]{article}
\pagenumbering{gobble}
\usepackage{multirow}
\usepackage{subfigure}
\usepackage{graphicx}
\usepackage{longtable}
\usepackage{courier}
\usepackage{float}
\usepackage{cite}
\usepackage{amsmath,amssymb}
\usepackage{bbold}
\usepackage{dsfont}
\usepackage[in]{fullpage}
\pagenumbering{arabic}
\usepackage[english]{babel}
\usepackage{amsthm}
\usepackage{changes}

\newtheorem{theorem}{Theorem}
\newtheorem{corollary}{Corollary}[theorem]
\newtheorem{definition}{Definition}

\usepackage{wrapfig}

\ExecuteOptions{compat2}

\usepackage[T1]{fontenc}
\usepackage{authblk}
\usepackage{array}
\usepackage{url}

\usepackage[pdfborder={0 0 0}]{hyperref}
\usepackage[margin=0.55in]{geometry}
\newcommand{\ou}{%
  \mathrel{%
    \vcenter{\offinterlineskip
      \ialign{##\cr$<$\cr\noalign{\kern-1.5pt}$>$\cr}
			 }%
  }%
}	
 \vspace{-8ex}
  \date{}

\begin{document}

 \title{Chernoff Information as a Privacy Constraint for Adversarial \\ Classification and Membership Advantage}
\author{%
Ay\c{s}e \"{U}nsal \\ \url{unsal@eurecom.fr}
}
\maketitle
 \pagestyle{plain}

  \begin{abstract}
This work investigates a privacy metric based on Chernoff information motivated by its importance in characterizing the optimal classifier's performance. Adversarial classification centers on minimizing the probability of error -either average error or correct detection error- when deciding between two classes in the binary setting. Classical hypothesis testing treats false alarm and mis-detection probabilities separately, resulting in asymmetric optimal error exponents. In this work, we instead characterize the relationship between $\varepsilon\textrm{-}$differential privacy, the optimal error exponent of one error probability conditioned on the other, and the optimal average error exponent.
To this end, we re-derive Chernoff differential privacy in connection with $\varepsilon\textrm{-}$differential
privacy using the Radon-Nikodym derivative and establish its relationship with Kullback-Leibler (KL) differential privacy to prove that Chernoff differential privacy is sandwiched between the two. We then present numerical evaluations demonstrating that Chernoff information outperforms the KL divergence as a function of the privacy parameter $\varepsilon$, particularly in capturing the impact of adversarial attacks under Laplace mechanisms.
Finally, we upper bound the adversary's advantage
in membership inference attacks based on Chernoff differential privacy and numerically
compare its performance with existing bounds derived using $(\varepsilon,\delta)\textrm{-}$differential privacy. Accordingly, we re-derive Chernoff differential privacy in connection with
$\varepsilon\textrm{-}$differential privacy using the Radon-Nikodym derivative, and prove its relation with
Kullback-Leibler (KL) differential privacy. Subsequently, we present
numerical evaluation results, which demonstrates that Chernoff information outperforms
Kullback-Leibler divergence as a function of the privacy parameter $\varepsilon$ and the impact of
the adversary's attack in Laplace mechanisms. Lastly, we introduce a new upper bound on
adversary's membership advantage in membership inference attacks using Chernoff DP and
numerically compare its performance with existing alternatives based on $(\varepsilon,\delta)\textrm{-}$differential
privacy in the literature.
\end{abstract}

\maketitle

\section{Introduction}\label{S:one}
The rapid and well-deserved success of machine learning (ML) applications-particularly over the past decade-has come with significant costs as a consequece of achieving high performance. Modern ML methods rely heavily on large-scale datasets to deliver accurate and efficient results, a dependence that increasingly jeopardizes the privacy and security of individuals who contribute-knowingly or unknowingly-to online data collections. As these applications become deeply embedded in everyday life across various sensitive domains, concerns over personal data privacy intensify, leaving data owners increasingly vulnerable to privacy breaches and has compelled robust privacy guarantees. 

Adversarial ML \cite{AdversarialML} studies privacy and security attacks and develop defense strategies to counter these attacks. Introducing adversarial examples to ML systems is a specific type of sophisticated and powerful attack, where additional -sometimes specially
crafted- or modified inputs are provided to the system with the intent of being misclassified
by the model as legitimate. To counter adversarial attacks that aim to manipulate data,
adversarial classification seeks to correctly detect such adversarial examples. In this setting,
the adversary's goal is to deceive a classifier that is originally designed to detect outliers.
Beyond the security implications of these misclassification attacks, user data privacy is also
subject to violations. This creates an interplay between the adversary's
 and the classifier's or defender's perspective in the study of adversarial
classification. 

Originally, classification theory covers the problem of optimally placing observations into different
categories, which are called classes. Each class is defined due to an optimal rule according
to some probabilistic description, which may or may not be subjected to some unknown
parameter(s). The main challenge here is to determine the optimal classifier's performance.
This performance is commonly characterized in connection with error probabilities in choosing
between one of the classes. In this paper, we employ the best average error exponent, namely the
Chernoff information/divergence, as a data privacy metric in classifying adversarial examples
targeting ML algorithms.

Another potential privacy risk is the membership inference attacks (MIAs), which involve repeatedly
querying a trained model to infer whether a specific data sample was included in its training
set, given access to a trained model and a target sample, the attacker probes the input space
and analyzes the model's outputs to determine the sample's membership in the training
data\cite{Shokri, Li13, Humphries23, MIA_survey}. MIAs are foundational privacy attacks targeting ML algorithms because it flags if there is some memorization of the training data or if the targeted model contains some information about
it. MIAs can be the backbones of other types of adversarial attacks, in other words protecting against MIAs leads to prevent against other attacks as well. One of the most powerful mitigation technique against it is the use of differential privacy (DP).
DP, originally introduced in \cite{DMNA06}, is the mathematical foundation of user data privacy. A randomized algorithm, also called a mechanism, guarantees the data to be analyzed without revealing personal information of any of the participants by employing DP. Essentially, a \textit{differentially private} mechanism ensures the level of the privacy of its individual participants and the output of the analysis to remain unaltered, even when \textit{any} user decides to leave or join the dataset with their personal information. 

In this paper, we introduce a privacy metric based on Chernoff information and analyze it through the Radon-Nikodym derivative framework. We compare this metric with Dwork's $\varepsilon$-DP and the Kullback-Leibler-based DP (KL-DP) proposed in \cite{CY16}, and analytically establish that Chernoff DP is sandwiched between $\varepsilon$-DP and KL-DP. Additionally, we derive a novel upper bound on the adversary's advantage in MIAs. We then numerically evaluate the proposed bound against existing theoretical guarantees in two distinct settings.
The main contributions of this work are summarized as follows:
\begin{enumerate}
\item We formulate Chernoff DP as a function of the Radon-Nikodym derivative and analyze its relationship with Dwork's $\varepsilon$-DP. We then characterize their analytical comparison in terms of the privacy budget and prove that $\varepsilon$-DP implies Chernoff DP.
\item We prove that Chernoff DP provides a stronger privacy guarantee than KL-DP.
\item We show that key structural properties-namely, the symmetry of Chernoff information and the composition property of differential privacy-are preserved under this formulation for both KL-DP and Chernoff DP.
\item We provide a numerical comparison of several divergence-based DP metrics.
\item We numerically evaluate the performance of $\varepsilon$-DP and Chernoff DP in terms of attack impact, global sensitivity of the query, and privacy budget.
\item We derive a new upper bound on the adversary's advantage in MIAs based on Chernoff DP and compare it numerically with existing bounds from the literature.
\end{enumerate}

\paragraph{Related Work}
The adaptation of well-known information-theoretic quantities as privacy metrics is not new. In addition to \cite{CY16}, which defines Shannon's mutual information as a privacy metric and proves that it coincides with $\varepsilon$-DP under certain conditions, other works have employed Rényi divergence \cite{Renyidp}, mutual information \cite{CY16, WYZ16, BK11, M12}, Kullback-Leibler divergence \cite{UO2021, UO22}, and min-entropy \cite{AA11, BK11} as privacy measures. An extensive tutorial on information-theoretic approaches to DP is provided in \cite{Unsal_CS}. Kullback-Leibler divergence and Chernoff information are particularly significant in classification problems, as they characterize the optimal error exponents for misdetection and average error probability, respectively. Kullback-Leibler divergence, along with other statistical divergence measures, has been widely used in classification settings \cite{JG1, nov24, Duchi16, Moreno, Hero2002}. To the best of our knowledge, however, Chernoff information has not been previously applied to adversarial classification or membership inference advantage. Recent works \cite{chernoff_fair, chernoff_fair2} employ Chernoff information based metric to study privacy–fairness trade-offs in Gaussian mechanisms.

MIAs, as fundamental privacy attacks against machine learning models, have also been extensively studied in a variety of settings \cite{Shokri, MIA_survey, Zari22, Li13}. Several works \cite{Yeom18, BayesSec, Erl20, Hall13, Humphries23} propose performance metrics to quantify attack success; however, these measures are typically derived under the $(\varepsilon, \delta)$-DP framework. In contrast, the present work adopts an information-theoretic perspective on DP.

\paragraph{Outline}
Section \ref{sec:preliminaries} starts off with properties and various definitions of DP and continues with basic preliminaries on hypothesis testing. Section \ref{sec:properties} introduces Chernoff DP and its properties. Our main result on the relation between Chernoff-DP, $\varepsilon\textrm{-}$DP and KL-DP is presented in Section \ref{sec:compare}. We re-visit the problem of adversarial classification in Section \ref{sec:adv_class} followed by Section \ref{sec:MIA} on various upper bounds on adversary's membership advantage. Lastly, we draw conclusions of our findings and briefly discuss the future work in Section \ref{sec:conc}.

\section{Preliminaries and Background \label{sec:preliminaries}}
This section presents the necessary preliminaries and establishes the theoretical background required for the remainder of the paper. In particular, we review some key concepts from measure theory, statistics, and information theory that will be used throughout our analysis.

In probability and measure theory, a function $\mu$ on a field $\mathcal{F}$ in a space $\Omega$ is called a measure given that the following conditions are met:
\begin{enumerate}
\item $\mu(A) \in [0,\infty]$ for $A \in \mathcal{F}$ ; 
\item $\mu(\emptyset)=0$;
\item if the sequence $\mathcal{F}\textrm{-}$sets $A_1, A_2,\cdots$ are a disjoint sequence of $\mathcal{F}\textrm{-}$sets where $\cup_{k=1}^{\infty} A_k \in \mathcal{F}$, then the following equality holds
\begin{equation}
\mu \left(\cup_{k=1}^{\infty} A_k\right)= \sum_{k=1}^{\infty}\mu(A_k).
\end{equation}
\end{enumerate} The pair $(\Omega, \mathcal{F})$ is called a \textit{measurable space} if $\mathcal{F}$ is a $\sigma\textrm{-}$field in the sample space $\Omega$ \cite{Prob_Measure}.
If a measure $P$ for $P(A)$ equals 0 whenever another measure $Q(A)$ equals 0, then $P$ is said to be dominated by another measure $Q$, denoted as $P<<Q$. For $P<<Q$, the Radon-Nikodym derivative of $P$ with respect to (w.r.t.) $Q$ is denoted by $\frac{dP}{dQ}$ \cite{Nikodym}.
\begin{definition}{[$(\varepsilon,\delta)\textrm{-}$closeness \cite{CY16}]} \label{def:Radon-Nikodym}
Probability distributions $P$ and $Q$ defined over the same measurable space $(\Omega, \mathcal{F})$ are called $(\varepsilon,\delta)\textrm{-}$close denoted by $P \overset{(\varepsilon,\delta)}{\approx}Q$
if the following couple of inequalities hold for any $A\in \mathcal{F}$.
\begin{align}
P(A) &\leq e^{\varepsilon} Q(A) +\delta \notag \\
Q(A) &\leq e^{\varepsilon} P(A) +\delta \notag
\end{align} 
\end{definition} When $\delta=0$, $(\varepsilon, 0)\textrm{-}$ closeness between $P$ and $Q$ can be represented by the Radon-Nikodym derivative as follows:
\begin{equation}\label{eq:RN_int}
e^{-\varepsilon}\leq \frac{dP}{dQ}(a)\leq e^{\varepsilon}, \; \forall a \in \Omega.
\end{equation} Equivalently, we have $\left|\log \frac{dP}{dQ}(a) \right| \leq \varepsilon$. In this paper, $\log(.)$ denotes the natural logarithm function.
\begin{definition}{[Neighboring datasets \cite{DR05}]} \label{eq:distance}
Any two datasets $x, \tilde{x}$ that differ only in one record are called neighbors. For two neighboring datasets, the equality $d(x, \tilde{x})=1$ holds, where $d(.,.)$ denotes the Hamming (or $l_1$) distance between two datasets. 
\end{definition} Definition \ref{eq:distance} anticipates symmetry among neighbors in terms of the size of the datasets. This could be further relaxed to include the datasets of different sizes, where neighborhood is due to addition or removal of a record. Both definitions ensure that the neighbors differ in a single row (in one user's data).
\begin{definition}{[$(\varepsilon,\delta)-$ differential privacy \cite{DR05}]} \label{def:dp_dwork}
A randomized algorithm $\mathcal{M}$ is $(\varepsilon, \delta)-$ differentially private if $\forall S \subseteq Range(\mathcal{M})$ and $\forall x, \tilde{x}$ that are neighbors within the domain of $\mathcal{M}$, the following inequality holds.
\begin{equation}\label{ineq:dp}
\Pr\left[\mathcal{M}(x) \in S\right] \leq \Pr\left[\mathcal{M}(\tilde{x}) \in S \right] e^{\varepsilon}+\delta.
\end{equation}
\end{definition} The randomized mechanism $\mathcal{M}$ can also be represented by the conditional distribution of the dataset $X^n=(X_1,X_2,\cdots,X^n)$ with the corresponding output $Y$ as $P_{Y|X^n}$. In this case, $P_{Y|X^n}$ satisfies $(\varepsilon, \delta)-$DP for all neighboring $x^n$ and $\tilde{x}^n$ if the following holds:
\begin{equation}\label{def:dp_closeness}
P_{Y|X^n} \overset{(\varepsilon, \delta)}{\approx} P_{Y|\tilde{X}^n}
\end{equation}  
Next, we remind the reader of the Kullback-Leibler DP (KL-DP) and mutual information differential privacy (MI-DP).
\begin{definition}{[KL-DP, \cite{CY16}]} \label{def:KL-DP}
A randomized mechanism $P_{Y|X}$ is said to guarantee $\varepsilon\textrm{-}$ KL-DP, if the following inequality holds for all its neighboring datasets $x$ and $\tilde{x}$, 
\begin{equation}
D(P_{Y|X^n}||P_{Y|\tilde{X}^n}) \leq \varepsilon.
\end{equation}
\end{definition}

\begin{definition}{[MI-DP \cite{CY16}]} \label{def:MI-DP}
$\varepsilon\textrm{-}$MI-DP holds for the randomized mechanism $P_{Y|X^n}$ if the following inequality is satisfied
\begin{equation}
\sup_{i, P_{X^n}} I(X_i;Y|X^{-i})\leq \varepsilon
\end{equation} where $X^{-i}$ denotes the sequence of $n$ random variables excluding $X_i$. 
\end{definition} The unit is expressed in nats rather than bits, as the natural logarithm is used throughout.
\subsection{Order of DP metrics}
The primary contribution of \cite{CY16} is a systematic comparison of $\varepsilon$-DP, KL-DP, and MI-DP, along with a characterization of their relative strength as privacy metrics. Formally, consider two privacy notions $a$-DP and $b$-DP with parameters $a,b>0$. We say that $a$-DP is \emph{stronger} than $b$-DP, and write
\begin{equation}
a\textrm{-}\mathrm{DP} \succeq b\textrm{-}\mathrm{DP}
\end{equation}
if for any positive $b'$, there exists a positive $a'$, such that 
\begin{equation}
a'\textrm{-}\mathrm{DP} \Longrightarrow b'\textrm{-}\mathrm{DP}.
\end{equation} In other words, $a\textrm{-}\mathrm{DP}$ is stronger than $b\textrm{-}\mathrm{DP}$ since $a\textrm{-}\mathrm{DP}$ implies $b\textrm{-}\mathrm{DP}$ for any non-negative and non-zero $b'$. Using this ordering, \cite[Theorem 1]{CY16} establishes the following chain of implications:
\begin{equation} \label{ineq:chain}
\varepsilon\textrm{-}\mathrm{DP} \succeq \mathrm{KL}\textrm{-}\mathrm{DP} \succeq \mathrm{MI}\textrm{-}\mathrm{DP} \succeq (\varepsilon,\delta\textrm{-})\mathrm{DP}
\end{equation} Here, the notation $\succeq$ indicates that the privacy guarantee on the left-hand side is stronger than that on the right-hand side, in the sense that the former implies the latter (up to an appropriate reparameterization).

\subsection{Hypothesis Testing \label{subsec:hypo_test}}

As one of the most widely used divergence measures, the Kullback-Leibler (KL) divergence \cite{Kullback, KL} between two probability measures $P$ and $Q$, both dominated by a common measure $\mu$, is defined as
\begin{equation} \label{eq:KL}
D(P||Q)= \int p \log \frac{p}{q} d\mu.
\end{equation}
Next, we define Chernoff information, also referred as Chernoff divergence, as follows.
\begin{definition}{[Chernoff Information \cite{Chernoff}]} \label{chernoff_def}
Chernoff information between any two probability measures $P$ and $Q$ with the dominating measure $\mu$ and parameter $\alpha$ is 
\begin{equation} \label{eq:chernoff_def}
C(P,Q)= \max_{\alpha \in (0,1)} - \log \int p^{\alpha}q^{1-\alpha} d\mu.
\end{equation}
\end{definition} (\ref{eq:chernoff_def}) can equivalently be expressed in terms of the logarithm of the $\alpha$-skewed Bhattacharyya coefficient $C_{\alpha}(P,Q)$, yielding 
\begin{equation}\label{eq:def_cher}
C(P,Q)= \max_{\alpha \in (0,1)} - \log C_{\alpha}(P,Q).
\end{equation} An important feature of Chernoff information-one that is not shared by the Kullback-Leibler divergence-is its symmetry, namely, $C(P,Q)=C(Q,P)$.

The significance of the divergence measures defined in (\ref{eq:KL}) and (\ref{eq:chernoff_def}) is established by Stein's lemma \cite{Chernoff, r14} in the contexts of classical and Bayesian binary hypothesis testing, respectively. Consider two competing hypotheses, $H_0$ and $H_1$, corresponding to probability distributions that are to be distinguished based on observed data. The probability of false alarm and the probability of misdetection are defined as
\begin{equation} P_{fa}=\Pr[\textrm{Choose}\; H_1|H_0\; \textrm{correct}], \quad P_{miss}=\Pr[\textrm{Choose}\; H_0|H_1\; \textrm{correct}].
\end{equation} When a priori probabilities are assigned to the hypotheses, the overall (Bayesian) average error probability is given by
\begin{equation}
P_e= \pi P_{fa} + (1-\pi) P_{miss}
\end{equation} for some $\pi \in (0,1)$. The optimal classifier obeys the following asymptotics for an $M-$dimensional random vector of observations;
\begin{align}
\lim_{M\rightarrow \infty}\frac{P_{fa}}{M} & = -D(Q||P),\;\;\textrm{for fixed } P_{miss}, \label{eq:P_fa} \\
\lim_{M\rightarrow \infty}\frac{P_{miss}}{M} & = -D(P||Q),\;\;\textrm{for fixed }P_{fa},  \label{eq:P_miss}\\
\lim_{M\rightarrow \infty}\frac{P_{e}}{M} & = -C(P,Q). \label{eq:P_av}
\end{align} In addition to the relations between error exponents and divergence functions, Chernoff showed in \cite{Chernoff} that $C(P,Q)$ can be used to obtain Kullback-Leibler divergence as follows:
\begin{equation}
\left[\frac{d C_{\alpha}(P,Q)}{d\alpha}\right]_{\alpha=0}=D(Q||P).  \label{eq:cher_kl1}
\end{equation} Similarly, the tangent at $\alpha=1$ gives $-D(P||Q)$. 
In this paper, we re-derive Chernoff information as a privacy metric and demonstrate how it relates and compares to $\varepsilon\textrm{-}$DP and KL-DP.
\section{Chernoff DP, KL-DP and Composability \label{sec:properties}}

We begin by restating the notion of Chernoff differential privacy introduced in \cite{UO22}.
\begin{definition}{[Chernoff-DP]}\label{def:C-DP}
A randomized mechanism $P_{Y|X}$ with input $X$ and output $Y$ guarantees $\varepsilon$-Chernoff differential privacy if, for every pair of neighboring datasets $x$ and $\tilde{x}$,
\begin{equation}
C(P_{Y|X^n},P_{Y|\tilde{X}^n})\leq \varepsilon,
\end{equation} for some $\varepsilon > 0$.
\end{definition}
The main objective of this paper is to determine how Chernoff-DP fits into the chain of inequalities in (\ref{ineq:chain}) established in \cite{CY16}, thereby clarifying its relative position among divergence-based privacy metrics.
In \cite{UO22}, we also considered a relaxed variant of this definition based on the logarithm of the $\alpha$-skewed Bhattacharyya affinity coefficient. Since the maximization over $\alpha$ is upper bounded by $\varepsilon$, the corresponding Chernoff information is likewise bounded above by $\varepsilon$ for all $\alpha$.

A natural question concerns the choice of divergence measure. In the specific case of adversarial classification, as in standard binary hypothesis testing, the Chernoff information characterizes the optimal exponential decay rate of the average error probability achieved by the Bayes classifier \cite{Chernoff}. In other words, it determines the best achievable error exponent under optimal decision rules. By contrast, the closest related Renyi divergence-based differential privacy measure \cite{Renyidp} does not directly quantify the probability of correctly detecting data manipulation in the presence of an adversarial attack, and therefore provides no explicit characterization of detection performance.

One of the fundamental properties of differential privacy is \emph{composability}, which prevents unbounded privacy leakage under repeated analyses. Intuitively, composition quantifies how privacy degrades when multiple mechanisms are applied to the same dataset. In particular, the \emph{sequential composition} property states that if several mechanisms, each corresponding to a different query, individually satisfy DP, then their joint release also satisfies DP. Moreover, the resulting privacy budget scales proportionally with the number of queries. Sequential composition \cite{DKMMN, KOV15} therefore provides an upper bound on the cumulative privacy loss incurred by releasing a sequence of outputs from DP mechanisms applied to the same data. For completeness, we recall this result in the following theorem.
\begin{theorem} For a collection of $m$ randomized algorithms $\mathcal{M}(x) =(\mathcal{M}_1(x), \cdots, \mathcal{M}_m(x))$ defined over $\mathcal{X}^n \rightarrow \mathcal{Y}^m$ and is run (over the same input data) independently, composability property of DP ensures that $\mathcal{M}$ satisfies $\sum_j^m \varepsilon_j\textrm{-}$DP.
\end{theorem}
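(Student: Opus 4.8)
The plan is to reduce the statement to the Radon-Nikodym characterization of $\varepsilon$-DP given in \eref{eq:RN_int} and then exploit the independence of the mechanisms to tensorize the derivatives. Fix two neighboring inputs $x$ and $\tilde{x}$ and write $P_{Y|X^n}$ and $P_{Y|\tilde{X}^n}$ for the joint output laws of $\mathcal{M}=(\mathcal{M}_1,\dots,\mathcal{M}_m)$ on $\mathcal{Y}^m$. Because the $m$ mechanisms are run independently on the same data, each joint law factorizes coordinatewise into a product measure, so that $P_{Y|X^n}$ is the product of its marginals $P_{Y_j|X^n}$ and likewise for $\tilde{x}$. The goal is then to show $\left|\log \frac{dP_{Y|X^n}}{dP_{Y|\tilde{X}^n}}\right| \le \sum_{j=1}^m \varepsilon_j$ almost everywhere, which by \eref{eq:RN_int} is precisely the assertion that $\mathcal{M}$ satisfies $\left(\sum_{j=1}^m \varepsilon_j\right)$-DP.

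First I would invoke the hypothesis that each $\mathcal{M}_j$ is $\varepsilon_j$-DP, which in the closeness form of Definition \ref{def:Radon-Nikodym} means $e^{-\varepsilon_j}\le \frac{dP_{Y_j|X^n}}{dP_{Y_j|\tilde{X}^n}} \le e^{\varepsilon_j}$, equivalently $\left|\log \frac{dP_{Y_j|X^n}}{dP_{Y_j|\tilde{X}^n}}\right| \le \varepsilon_j$. The key step is the tensorization identity for Radon-Nikodym derivatives of product measures: for the factorized laws above, the derivative of the joint equals the product of the coordinate derivatives, $\frac{dP_{Y|X^n}}{dP_{Y|\tilde{X}^n}}(y_1,\dots,y_m) = \prod_{j=1}^m \frac{dP_{Y_j|X^n}}{dP_{Y_j|\tilde{X}^n}}(y_j)$. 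Taking logarithms, applying the triangle inequality, and bounding each summand by $\varepsilon_j$ then yields $\left|\log \frac{dP_{Y|X^n}}{dP_{Y|\tilde{X}^n}}\right| \le \sum_{j=1}^m \varepsilon_j$, which completes the argument. Equivalently, in the set-based form of inequality \eref{ineq:dp}, one integrates the product of the pointwise density bounds $e^{\varepsilon_j}$ over any measurable $S \subseteq \mathcal{Y}^m$.

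The main obstacle I anticipate is purely measure-theoretic rather than combinatorial: justifying the tensorization of the Radon-Nikodym derivative when $\mathcal{Y}$ is continuous, and equivalently that the product of the per-mechanism density bounds integrates correctly over an arbitrary measurable $S\subseteq\mathcal{Y}^m$. In the discrete-output case this step is transparent, since the joint probability of any tuple is literally the product of per-mechanism probabilities and the ratio bound follows factor by factor; in the general case one appeals to Fubini--Tonelli together with the fact that the derivative of one product measure with respect to another factorizes across coordinates. Everything else---reducing to \eref{eq:RN_int}, applying the $\varepsilon_j$-DP hypotheses, and summing the log-ratios---is routine.
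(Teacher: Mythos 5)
Your proposal is correct and takes essentially the same route as the paper: the paper's (very terse) proof simply invokes the multiplicative behavior of the independent mechanisms' output laws, with the product of the per-mechanism bounds $e^{\varepsilon_j}$ giving the upper bound $e^{\sum_j \varepsilon_j}$ and the same logic with negative exponents giving the lower bound. Your write-up is a rigorous elaboration of exactly this idea, phrased through the tensorization of the Radon--Nikodym derivative and the characterization in \eref{eq:RN_int}, which is consistent with how the paper itself handles $\varepsilon$-DP elsewhere.
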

The proof follows directly from the multiplicative property of independent events when forming a product distribution over a tuple. An analogous argument establishes the corresponding lower bound by considering the negative exponent. 
The additivity of the Kullback-Leibler divergence immediately yields a sequential composition result when KL divergence is used to define the privacy constraint, as in Definition \ref{def:KL-DP}.
\begin{corollary} \label{cor:KLComp}
Let $\{Y_j\}$ for $j=1,\cdots,m$ be a collection of conditionally independent query outputs given the dataset, where each mechanism $P_{Y_j|X^n}$ satisfies $\varepsilon_j$-KL-DP, then the joint mechanism also satisfies KL-DP with a privacy budget of $\sum_j^m \varepsilon_j = \varepsilon$.
\end{corollary}
The proof follows directly from the additivity of KL divergence under conditional independence.
\begin{proof}
The collection of $m$ conditionally independent mechanisms that satisfy $\varepsilon_j\textrm{-}$KL-DP is given by
\begin{align}
D(P_{Y^m|X^n}||P_{Y^m|\tilde{X}^n})
&= \sum_{j=1}^m D(P_{Y_j|X^n}||P_{Y_j|\tilde{X}^n})\label{proof_1} \\
& \leq \sum_j^m \varepsilon_j \label{proof_2}
\end{align} which is equal to $\varepsilon$.
(\ref{proof_1}) follows from the fact that the set of statistically independent query outputs given the dataset represent the mechanism $P_{Y^m|X^n}$ which is the product of the individual mechanisms for each $j$. Substituting the Definition \ref{def:KL-DP} into (\ref{proof_1}) yields (\ref{proof_2}). The rest is a result of the following property of the logarithmic function in the definition of Kullback-Leibler divergence; $\log (a \times b)=\log a + \log b$. \end{proof}
Similarly, MI-DP as defined in Definition \ref{def:MI-DP} is shown in \cite{CY16} to satisfy a composition theorem by invoking the chain rule of mutual information. In general, Chernoff information as defined in Definition \ref{chernoff_def} is not additive, due to the maximization over its parameter $\alpha$. This contrasts with the Kullback-Leibler divergence and the Bhattacharyya distance (the latter corresponding to Chernoff information with $\alpha = 0.5$), both of which are additive under independence.
We now state the sequential composition property for Chernoff-DP.
\begin{corollary} \label{cor:chernoff_comp}
Suppose that, for each $j=1,\dots,m$, the randomized mechanism $P_{Y_j|X^n}$ satisfies $\varepsilon_j$-Chernoff DP. If the mechanisms are conditionally independent given the dataset, then their joint release satisfies Chernoff-DP with overall privacy budget $\sum_{j=1}^m \varepsilon_j$.
\end{corollary}
\begin{proof}
By definition, Chernoff-DP is expanded out as follows:
\begin{equation}
\max_{\alpha \in (0,1)} -\log C_{\alpha}(P_{Y|X^n},P_{Y|\tilde{X}^n}) \leq \varepsilon_k \label{1st_line}
\end{equation} The maximization on the l.h.s. of (\ref{1st_line}) can be removed without having any effect on the upper bound since it upper bounds the function for any value of $\alpha$ in its range.
\begin{equation}
-\log C_{\alpha}(P_{Y|X^n},P_{Y|\tilde{X}^n}) \leq \varepsilon_k
\end{equation}  Substituting conditionally independent randomized mechanisms $P_{Y^m|X^n}$ and $P_{Y^m|\tilde{X}^n}$ after removal of the maximization function, we have for the l.h.s.
\begin{align}
-\log \int P_{Y^m|X^n}^{\alpha} P_{Y^m|\tilde{X}^n}^{1-\alpha} d\mu&= -\log \prod_{j=1}^m \int  P_{Y_j|X^n}^{\alpha} P_{Y_j|\tilde{X}^n}^{1-\alpha} d\mu \notag \\
&\leq \sum_{j=1}^m \varepsilon_j
\end{align} It is worth noting that, unlike $\varepsilon\textrm{-}$DP that is confined in the interval $[e^{-\varepsilon}, e^{\varepsilon}]$, Chernoff-DP is lower bounded by 0.
Contrary to Chernoff-DP, Chernoff information defined by (\ref{eq:chernoff_def}) is not additive. \end{proof}
\section{Analytical Comparison and Ordering of Divergence-Based DP Metrics \label{sec:compare}}
In this section, we re-derive Chernoff-DP via $\varepsilon\textrm{-}$DP.
As shown in \cite{CY16}, KL-DP (Definition \ref{def:KL-DP}) can be positioned between $\varepsilon$-DP and $(\varepsilon,\delta)$-DP by expressing the Kullback-Leibler divergence as a function of the Radon-Nikodym derivative $\frac{dP}{dQ}$. Specifically, for two probability measures $P$ and $Q$, we have
\begin{align}
D(P||Q)&=\int dP(a) \log \frac{dP}{dQ}(a) \label{eq:first_line}\\
&= \int \left[\frac{dP}{dQ}(a)\log \frac{dP}{dQ}(a) dQ(a)\right] \label{eq:2ndline}\\
&= \mathbb{E}\left[Z \log Z \right]  \label{eq:last_line}
\end{align} where (\ref{eq:2ndline}) follows by rewriting the integral with respect to $Q$. Letting $X \sim Q$, we define the Radon-Nikodym derivative as the random variable 
\begin{equation}
Z=\frac{dP}{dQ}(X).
\end{equation} Under $\varepsilon\textrm{-}$DP, the likelihood ratio is bounded as $Z \in [e^{-\varepsilon}, e^{\varepsilon}]$ and satisfies the normalization constraint $\mathbb{E}[Z] = 1$. To characterize the extremal KL divergence under these constraints, $Z$ is concentrated at the boundary points $e^{-\varepsilon}$ and $e^{\varepsilon}$. In particular, $Z$ takes the value $e^{\varepsilon}$ with probability $p=\frac{1-e^{-\varepsilon}}{e^{\varepsilon}-e^{-\varepsilon}}$ and the value $e^{-\varepsilon}$ with probability $1-p$, ensuring that $\mathbb{E}[Z] = 1$. Substituting this two-point distribution into (\ref{eq:2ndline}) yields the KL divergence expressed explicitly as a function of the Radon-Nikodym derivative bounds for (\ref{eq:last_line})
\begin{equation}\label{eq:KL_cuff}
D(P||Q) \leq \varepsilon \frac{(e^{\varepsilon}-1)(1-e^{-\varepsilon})}{e^{\varepsilon}-e^{-\varepsilon}}
\end{equation} 
This expression characterizes the maximum KL divergence compatible with $\varepsilon$-DP. An immediate and notable consequence of this formulation is the induced symmetry between $D(P||Q)$ and $D(Q||P)$, when the divergence is interpreted as a privacy metric under the likelihood-ratio bounds imposed by $\varepsilon$-DP.

We now follow an analogous approach for Chernoff information. Specifically, we substitute the Radon-Nikodym derivative into the definition of Chernoff information in order to express it explicitly in terms of the $\varepsilon$-DP constraint.

\subsection{Characterizing Chernoff-DP through \texorpdfstring{$\varepsilon\textrm{-}$}{epsilon}DP}
The $\alpha$-skewed Bhattacharyya affinity coefficient, defined as 
\begin{equation}
C_{\alpha}(P,Q)=\int p^{\alpha} q^{1-\alpha} d\mu
\end{equation} can be rewritten in terms of $Q$ by a simple change of variables. Specifically, when $P \ll Q$, we have
\begin{equation}\label{def:Qbased}
\int p^{\alpha}q^{1-\alpha} d\mu=\int \left(p/q\right)^{\alpha}dQ
\end{equation} where $p$ and $q$ are the Radon-Nikodym derivatives of $P$ and $Q$ with respect to $\mu$.
Equivalently, $C_{\alpha}(P,Q)$ can be expressed in terms of $Q$, yielding an alternative form of the $\alpha$-skewed Bhattacharyya affinity coefficient. This representation leads to the same Chernoff information as defined in (\ref{eq:chernoff_def}).
\begin{equation}\label{def:Pbased}
\int p^{\alpha}q^{1-\alpha} d\mu=\int \left(q/p\right)^{1-\alpha}dP
\end{equation}
It is straightforward to rewrite (\ref{def:Qbased}) as
\begin{align}\label{Ch_Q_based}
C_{\alpha}(P,Q)&= \int \left(\frac{dP}{dQ}(a)\right)^{\alpha} dQ(a) \\
&=\mathbb{E}\left[Z^{\alpha}\right] \label{eq:Q_based_2ndline}
\end{align} where in (\ref{eq:Q_based_2ndline}), we define the likelihood-ratio random variable
\begin{equation}
Z=\frac{dP}{dQ}(X), \quad X \sim Q.
\end{equation} Plugging $C_{\alpha}(P,Q)$ into (\ref{eq:def_cher}), we finally get $C(P,Q)$
\begin{equation} \label{eq:optim_fail_eq}
C(P,Q)=\max_{\alpha} \left\{\log(\alpha+1)-\log\left(e^{\varepsilon(\alpha+1)}-e^{-\varepsilon(\alpha+1)}\right)\right\}.
\end{equation} 

Similarly, if the integral representation of $C_{\alpha}(P,Q)$ is written with respect to $P$ instead of $Q$, substituting the Radon-Nikodym derivative yields $C_{\alpha}(P,Q)=\mathbb{E}_P\left[1/Z^{1-\alpha}\right]$. Thus, the $\alpha$-skewed Bhattacharyya affinity coefficient admits equivalent representations under either measure.
Using this alternative representation of the $\alpha$-skewed Bhattacharyya affinity coefficient, we obtain
\begin{align}\label{Ch_P_based}
C_{\alpha}(P,Q)&= \int \left(\frac{dQ}{dP}(a)\right)^{1-\alpha} dP(a)\\
&= \mathbb{E}\left[Z^{\alpha-1}\right]
\end{align} where $X \sim P$ and the likelihood-ratio random variable is defined as $Z=\frac{dP}{dQ}(X)$.
Note that the representations in (\ref{Ch_Q_based}) and (\ref{Ch_P_based}) are equivalent; they simply correspond to expressing the integral with respect to different dominating measures.
The following theorem establishes the ordering of Chernoff-DP relative to KL-DP and $\varepsilon\textrm{-}$DP by expressing each notion as a function of the Radon-Nikodym derivative $dP/dQ$.
\begin{theorem}[Main result I]\label{main_theorem}
The following chain of implications holds
\begin{equation} \label{eq:chain}
\varepsilon\textrm{-}\mathrm{DP}  \succeq \textrm{Chernoff-}\mathrm{DP}\succeq \textrm{KL-DP}.
\end{equation}
\end{theorem} Theorem \ref{main_theorem} shows that $\varepsilon$-DP provides the strongest guarantee among the three, while KL-DP is the weakest, with Chernoff-DP lying strictly between them in terms of privacy strength.
\begin{proof}
In order to prove Theorem \ref{main_theorem}, we need to show that relations (a) and (b) hold:
\begin{equation}
\varepsilon\textrm{-}\mathrm{DP}  \overset{(a)}{\succeq} \textrm{Chernoff-}\mathrm{DP}\overset{(b)}{\succeq}\textrm{KL-DP}
\end{equation}
To obtain Chernoff DP via $\varepsilon\textrm{-}$DP of Definitions \ref{def:Radon-Nikodym} and \ref{def:dp_closeness} as the proof of (a), we need to re-write Chernoff DP as a function of $Z$.
With the expectation in (\ref{eq:Q_based_2ndline}) over its range $[e^{-\varepsilon},e^{\varepsilon}]$, $C_{\alpha}(P,Q)$ becomes
\begin{equation}
C_{\alpha}(P,Q)=\frac{1}{\alpha+1} \left(e^{\varepsilon(\alpha+1)}-e^{-\varepsilon(\alpha+1)}\right)\label{eq:Calpha}
\end{equation}
Substituting (\ref{eq:Calpha}) into Chernoff DP, for the first expansion as a function of (\ref{eq:Q_based_2ndline}), we get (\ref{eq:optim_fail_eq})
or equivalently,
\begin{equation}
C(P,Q)=\max_{\alpha}\left\{\log(\alpha+1)+\varepsilon(\alpha+1)-\log\left(e^{2\varepsilon(\alpha+1)}-1\right)\right\}.\;\;\; \label{eq:optim_fail}
\end{equation}
Jensen's equality states that for any convex function $g$ and a random variable $X$, the following inequality holds 
\begin{equation}
\mathbb{E}\left[g(X)\right] \geq g\left(\mathbb{E}[X]\right).
\end{equation} Note that, if $g(X)$ is concave, then $-g(X)$ is convex. Accordingly, $-\log \mathbb{E}\left[Z^{\alpha}\right]$ of $C(P,Q)$ is upper bounded by $\mathbb{E}\left[-\log Z^{\alpha}\right]$ as follows
\begin{align}
\mathbb{E}\left[-\log Z^{\alpha}\right]&=-\alpha \varepsilon \frac{1-e^{-\alpha \varepsilon}}{e^{\alpha \varepsilon}-e^{-\alpha \varepsilon}}+ \alpha \varepsilon \frac{e^{\alpha \varepsilon}-1}{e^{\alpha \varepsilon}-e^{-\alpha \varepsilon}}\\
&=\alpha \varepsilon \frac{e^{\alpha \varepsilon}+e^{-\alpha \varepsilon}-2}{e^{\alpha \varepsilon}-e^{-\alpha \varepsilon}}.\label{eq:finalform}
\end{align} We could rewrite (\ref{eq:finalform}) in relation to $C(P,Q)$ as
\begin{equation}
C(P,Q)\leq \alpha \varepsilon \frac{(e^{\alpha \varepsilon}-1)(1-e^{-\alpha \varepsilon})}{e^{\alpha \varepsilon}-e^{-\alpha \varepsilon}}. \label{eq:ch_proof}
\end{equation} This proves the left hand side (l.h.s.) of Chernoff-DP in the chain of inequalities (\ref{eq:chain}). As for proving (b), the right hand side (r.h.s.) of the chain of inequalities on the relation between Chernoff DP and KL-DP, we use the direct relation shown in (\ref{eq:cher_kl1}) \cite{Chernoff}. The reader should note that, originally KL divergence is asymmetric unlike KL-DP of (\ref{def:KL-DP}). Verifying either of the tangent at $\alpha=0$ in (\ref{eq:cher_kl1}) and at $\alpha=1$ will be sufficient to prove $\textrm{Chernoff-}\mathrm{DP}\succeq \textrm{KL-DP}$. Taking the derivative of the r.h.s. of (\ref{eq:ch_proof}) yields 0 for $\alpha=0$ and for $\alpha=1$, we get
\begin{equation}
\left[\frac{d \alpha \varepsilon \frac{(e^{\alpha \varepsilon}-1)(1-e^{-\alpha \varepsilon})}{e^{\alpha \varepsilon}-e^{-\alpha \varepsilon}} }{d\alpha}\right]_{\alpha=1}=\varepsilon \frac{e^{\varepsilon}-1}{e^{\varepsilon}+1} \label{eq:tangent1}
\end{equation} 
Simplifying (\ref{eq:KL_cuff}), we also have 
\begin{align}
D(P||Q) &\leq \varepsilon \frac{(e^{\varepsilon}-1)(1-e^{-\varepsilon})}{e^{\varepsilon}-e^{-\varepsilon}} \\
&=\varepsilon \frac{e^{2\varepsilon}-2e^{\varepsilon}+1}{e^{2\varepsilon}-1} \\
&=\varepsilon \frac{e^{\varepsilon}-1}{e^{\varepsilon}+1}\label{eq:KL_final}
\end{align} (\ref{eq:KL_final}) is equivalent to (\ref{eq:tangent1}). This completes the proof of (b) and Theorem \ref{main_theorem}.
\end{proof} 

\subsection{Sub-optimal solution by upper bounding the logarithmic function} 
Notably, optimization of \eqref{eq:optim_fail_eq}, which is obtained by substituting the Radon-Nikodym derivative into the Chernoff-DP formulation via (\ref{eq:Q_based_2ndline}), does not yield a closed-form expression characterizing the relationship between $\alpha$ and $\varepsilon$.
For the sake of completeness, in order to be able to define such relationship, we upper bound the Chernoff-DP in step (\ref{eq:optim_fail}), since the logarithmic function is monotonically increasing and the following inequality holds $\log(x+1) \leq x$ for any positive $x$. Hence, upper bounding the first term of (\ref{eq:optim_fail}) yields
\begin{equation} \label{eq:upper}
C_{ub}(P,Q)=\max_{\alpha}\{\alpha+\varepsilon(\alpha+1)-\log\left(e^{2\varepsilon(\alpha+1)}-1\right)\}
\end{equation} where $C(P,Q)\leq C_{ub}(P,Q)$. To find the sub-optimal $\alpha$, we differentiate $C_{ub}(P,Q)$ and solve for $\alpha$ in terms of $\varepsilon$ by setting the derivative equal to zero as follows. 
\begin{equation}
\frac{dC_{ub}(P,Q)}{d\alpha}=1+\varepsilon-\frac{2\varepsilon e^{2\varepsilon(\alpha+1)}}{e^{2\varepsilon(\alpha+1)}-1}
\end{equation} $\alpha$ that maximizes the upper bound on Chernoff-DP as a function of the privacy parameter is given by $\alpha_{ub,I}^{*}=\frac{1}{2\varepsilon}\log\frac{1+\varepsilon}{1-\varepsilon}-1$. Given that $\alpha$ is defined over $(0,1]$, $\alpha_{ub,I}^{*}$ also has to obey this range. On the l.h.s. of $0<\alpha_{ub,I}^*\leq1$, we have $\varepsilon >0$, whereas the r.h.s. yields
\begin{equation}\color[rgb]{0,0,0}\color[rgb]{0,0,0}
\frac{1}{2\varepsilon}\log\left(\frac{1+\varepsilon}{1-\varepsilon}\right)\leq 2. \label{eq:deriv1} 
\end{equation} Lower bounding the l.h.s. of (\ref{eq:deriv1}) by $\log x \geq 1-\frac{1}{x}$ for $x \in \mathbb{R}_{>0}$ yields $\varepsilon \geq -.5$ which obeys the range we obtain from $\alpha_{ub,I}^*>0$ and does not conflict with Definition \ref{def:C-DP}.

For the second expansion (\ref{Ch_P_based}), where the expectation of the Radon-Nikodym derivative is based on $P$, we achieve
\begin{equation}
C(P,Q)=\max_{\alpha} \left\{\log(\alpha)-\log \left(e^{\varepsilon \alpha}-e^{-\varepsilon \alpha} \right)\right\} 
\label{eq:optim_fail2}
\end{equation} for $C_{\alpha}(P,Q)=\frac{1}{\alpha} \left(e^{\varepsilon \alpha}-e^{-\varepsilon \alpha}\right)$. As in the first expansion with no explicit solution of maximization based on $\alpha$, via upper bounding the first natural logarithm for any positive real number via $\log x \leq x-1$, we obtain 
\begin{equation}
C_{\alpha}(P,Q)\leq \alpha-1+\varepsilon \alpha-\log\left(e^{2\varepsilon \alpha}-1\right).
\end{equation} with the sub-optimal solution for $\alpha$ given by $\alpha_{ub,II}^*=\frac{1}{2\varepsilon}\log \frac{1+\varepsilon}{1-\varepsilon}$. In order for $\alpha_{ub,II}^*$ to be valid, it must be confined in $(0,1]$. Accordingly, $\alpha^*_{ub,II}>0$ is guaranteed by a positive privacy parameter in accordance with Definition \ref{def:C-DP}. As for the upper bound on $\alpha_{ub,II}^*$, we obtain
\begin{equation}
\frac{1}{2\varepsilon}\log \frac{1+\varepsilon}{1-\varepsilon} \leq 1 \label{ineq:alpha}
\end{equation} which again yields $\varepsilon >0$. In (\ref{ineq:alpha}), the l.h.s. is lower bounded via the following property of natural logarithm function $\log x \geq 1-\frac{1}{x}$ where $x \in \mathbb{R}_{>0}$.
The symmetry in Chernoff information of Definition \ref{eq:chernoff_def} is preserved in Chernoff-DP hence, substitutions of both values of $\alpha$ into the upper bounds on two different expressions (\ref{eq:optim_fail}) and (\ref{eq:optim_fail2}) result in 
\begin{align}
C(P,Q) &\leq C^*_{ub}(P,Q) \notag \\
&=\left(\frac{1}{2\varepsilon}+\frac{1}{2}\right)\log \frac{1+\varepsilon}{1-\varepsilon}-1+\log \left(\frac{2\varepsilon}{1-\varepsilon}\right),
\end{align} where we denote the optimal upper bound on Chernoff-DP by $C^*_{ub}(P,Q)$. It is worth noting that, the symmetry property of Chernoff information is carried on the optimal upper bound on the DP metric due to the identical bounding step applied on both expansions. Theorem \ref{main_theorem} can be interpreted as necessary condition for $\varepsilon-$DP to imply Chernoff-DP which is dependent on the relation between the parameter $\alpha$ and the privacy parameter $\varepsilon$ to hold. Here the effect of different expansions are emphasized on the optimal value of parameter $\alpha$. 

\subsection{Numerical Evaluation} Figures \ref{fig:cher_bounds_high}-\ref{fig:cher_bounds_low} depict the numerical comparison of various divergence based DP metrics. Accordingly, Chernoff-Cuff refers to (\ref{eq:optim_fail_eq}). $D(P||Q)$ of (\ref{eq:KL_cuff}) is plotted via its product with $(1-\alpha)$ and is referred as KL-Cuff in the legend of Figures \ref{fig:cher_bounds_high} and \ref{fig:cher_bounds_low}. Additionally, $D(P||Q)$ and $-D(Q||P)$ correspond to respective derivatives of Chernoff-Cuff for $\alpha=0$ and $\alpha=1$. The corresponding curves are plotted via the products $\alpha \cdot D(Q||P)$ and $(1-\alpha)\cdot D(P||Q)$. It is clear from both plots that for smaller values of $\alpha$, Chernoff-Cuff of \eqref{eq:optim_fail_eq} is the tightest one among the alternatives. 
 \begin{figure}[ht]
	\centering
			\includegraphics[width=0.75\textwidth]{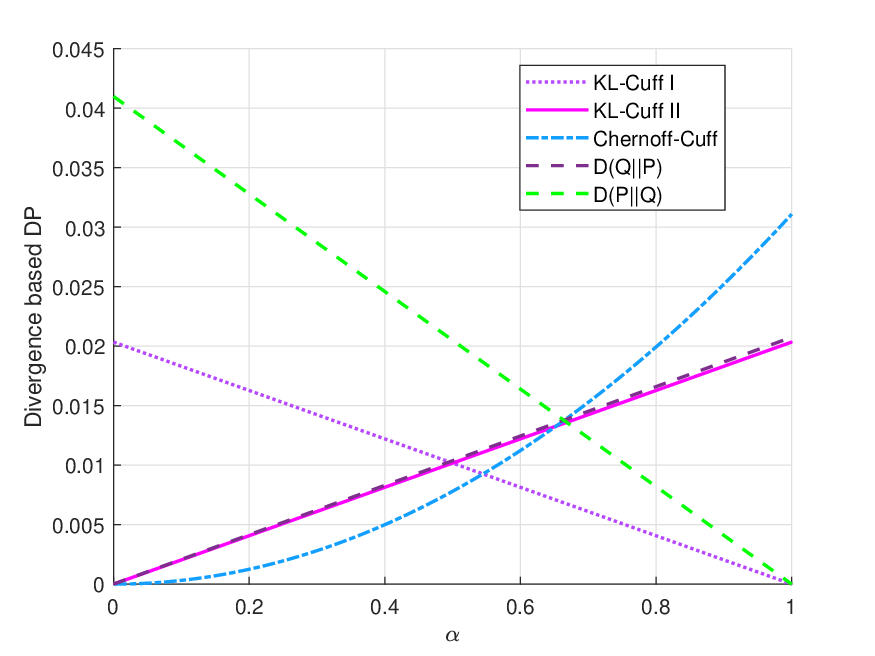}
			\caption{$\varepsilon=.25$.}
			\label{fig:cher_bounds_high}
\end{figure}
\begin{figure}
			\centering
			\includegraphics[width=0.75\textwidth]{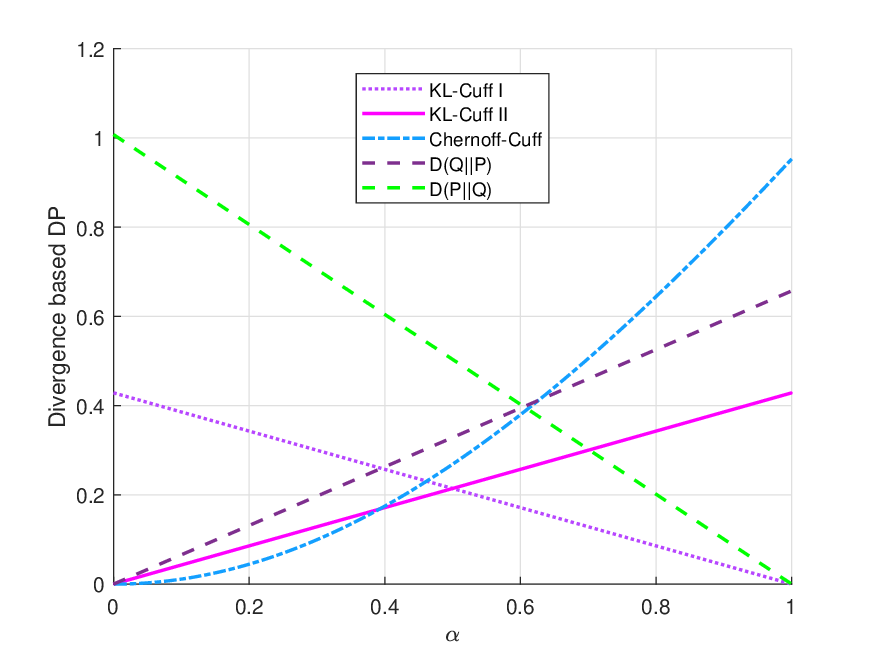}
			\caption{$\varepsilon=1.5$.}
			\label{fig:cher_bounds_low}
\end{figure}

\section{Adversarial Classification with DP \label{sec:adv_class}}

As discussed in Section \ref{subsec:hypo_test}, the motivation for considering privacy metrics based on Chernoff and KL divergences stems from their characterization of optimal error exponents in binary hypothesis testing. Consider an adversarial setting, in which a powerful attacker exploits the presence of a privacy mechanism to avoid detection. In particular, the adversary is assumed to be aware of the deployed privacy protection and aims to tune the magnitude of the attack according to the privacy budget of the mechanism in order to minimize the probability of being detected \cite{UO2021}. Here, the attack refers to a modification of the dataset, such as inserting, deleting or replacing a record. To provide a numerical comparison between KL-DP and Chernoff-DP in the context of adversarial classification under the Laplace mechanism, we briefly recall it below.
\begin{definition}
The Laplace mechanism \cite{DMNA06} for a query function $f: D \rightarrow \mathbb{R}^k$ is defined by adding independent Laplace noise components $N_i \sim \mathrm{Lap}(b=s/\varepsilon)$, $i=1,\dots,k$, to the query output:
\begin{equation}
\mathcal{Y}(x, f(.), \varepsilon)= f(x)+(N_1, \cdots, N_k)
\end{equation} where $s$ denotes the global sensitivity of $f$, 
\begin{equation}\|f(x)-f(\tilde{x})\|_1 \leq s,
\end{equation} for all neighboring datasets $x$ and $\tilde{x}$.
\end{definition}

Let the dataset be denoted by $X^n = {X_1,\dots,X_n}$. The released output under the Laplace mechanism is
\begin{equation}
\mathcal{Y}(x,f(.),\varepsilon)=Y=f(X^n)+N, \quad N \sim \mathrm{Lap}(0,b)
\end{equation} where $b = s/\varepsilon$.
Suppose an adversary alters the dataset by inserting or deleting a single record, denoted $X_a$. The modified output becomes
\begin{equation}
Y_a=f(X^n+X_a)+N.
\end{equation}
This setting can be formulated as a binary hypothesis testing problem, where the defender seeks to determine whether the dataset has been altered. The two hypotheses correspond to the presence or absence of the adversarial modification. Framing the problem in this way enables the defender to determine a detection threshold while simultaneously allowing the adversary to adjust the attack magnitude to evade detection \cite{UO2021}.

When $f(\cdot)$ is linear, even if the adversary only observes the published output, the likelihood ratio test can be constructed explicitly. The performance of this test is quantified by error probabilities that are directly characterized by either the KL divergence (in the classical setting) or the Chernoff information (in the Bayesian setting). The key distinction between the two formulations lies in the role of prior probabilities $\pi$ and $1-\pi$ assigned to the opposing hypotheses. However, as the number $M$ of i.i.d. observations grows, the influence of the priors vanishes asymptotically \cite{r14}, and the error exponent becomes independent of $\pi$. 

In this context, neighboring datasets correspond to the dataset before and after the adversarial modification. The associated probability distributions are therefore Laplace distributions with shifted means. Specifically, the classical test compares $N\sim Lap(\mu_0, b=s/\varepsilon)$, against $Lap(\mu_1, \theta b)$ for $\theta>1$ in the Bayesian formulation. The mean shift $\Delta\mu = \mu_1 - \mu_0$ results from the insertion or deletion of $X_a$, with $\mu_0 = 0$.
The Kullback-Leibler divergence between two Laplace distributions is derived in \cite{UO2021} as
\begin{equation}
D(P||Q)=\log \theta -1 +\frac{|\Delta \mu|}{\theta b}+\frac{1}{\theta}e^{-|\Delta \mu|/b}. \label{eq:KL-comp}
\end{equation} Similarly, the Chernoff information between two Laplace distributions \cite{JS03}, we get the following expression adapted to the described problem
\begin{equation}
C(P,Q)=\frac{|\Delta\mu|}{\theta b}-\log\left(1+\frac{|\Delta\mu|}{\theta b}\right).\label{eq:C-comp}
\end{equation}
\subsection{Numerical Evaluation} In Figure \ref{fig:compare_div}, we present a numerical comparison of (\ref{eq:KL-comp}) and (\ref{eq:C-comp}) for various values of $\Delta \mu$, highlighting their behavior as a function of the query sensitivity and the privacy budget.
The parameter $\theta$ represents the change in variance after the attack, while $\Delta\mu$ denotes the shift in the mean caused by the addition (or removal, if negative) of $X_a$, expressed as a multiple of the sensitivity $s$. As $\Delta\mu$ increases relative to the sensitivity, the divergence progressively approaches the upper bound. In the extreme case where $\Delta\mu = 3s$ and $\theta > 1$, both the Kullback-Leibler and Chernoff divergences exceed the upper bound $\varepsilon$. Across all three scenarios, the Chernoff divergence provides a significantly tighter bound than the Kullback-Leibler divergence as a function of the privacy budget.
\begin{figure}[ht]
 \centering
  \includegraphics[width=0.75\linewidth]{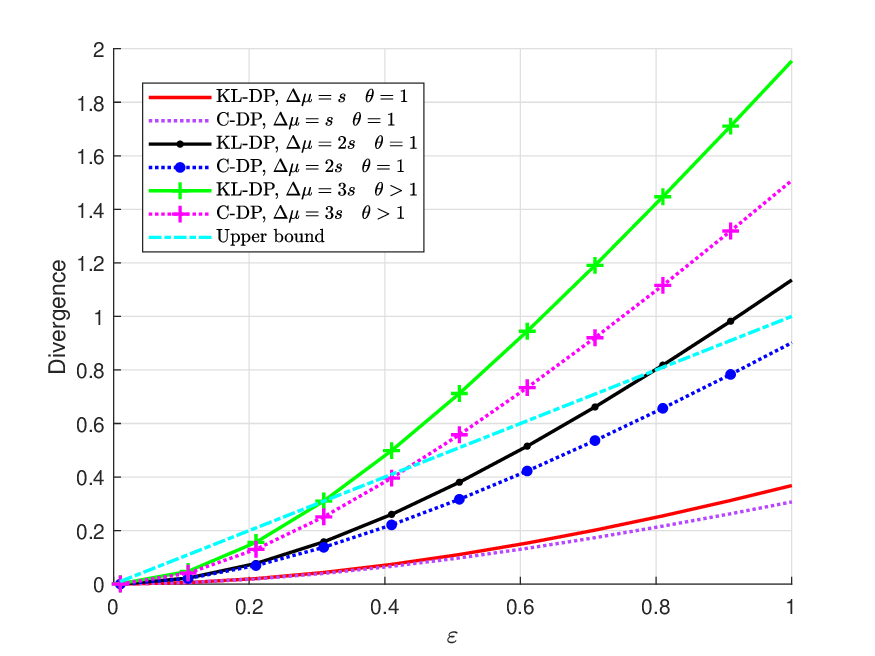}
  \caption{Numerical comparison of KL-DP and Chernoff DP}
  \label{fig:compare_div}
\end{figure}

\section{Adversary's Membership Advantage \label{sec:MIA}}
Membership inference attacks target machine learning (ML) models by attempting to determine whether a given data sample was included in the model's training dataset. These attacks are particularly effective in the presence of overfitting, where the model memorizes the training data rather than learning generalizable patterns, resulting in degraded performance on unseen data. In such cases, the discrepancy between training and non-training samples can be exploited, leading to a high attack success rate.

When models are trained on sensitive data, this vulnerability poses a significant privacy risk to data owners. The adversary's membership advantage is typically defined as the difference between the true positive rate and the false positive rate in deciding whether a specific data point belongs to the training dataset.
In this section, we first review existing upper bounds on the adversary's membership advantage from the literature and then introduce a novel bound based on our proposed approach.

In the original work of \cite{Yeom18}, a data point is defined as a pair $z = (x,y) \in \mathbf{X} \times \mathbf{Y}$, where $x$ denotes the feature vector and $y$ the corresponding response (or label). Let $P_0$ be the underlying data distribution. A training dataset $P_1 \sim P_0^n$ is formed by sampling $n>0$ data points independently and identically distributed (i.i.d.) from $P_0$. The membership inference experiment in \cite{Yeom18} considers an adversary's task of deciding whether a given point $z$ was drawn from the population distribution $P_0$ or from the training dataset $P_1$. After sampling the training data, a bit $b \in {0,1}$ is drawn uniformly. If $b=0$, then $z \sim P_0$; if $b=1$, then $z \sim P_1$. The adversary $\mathcal{A}$ observes the relevant information and outputs a guess $\hat{b} \in {0,1}$.

The adversary's membership advantage is defined as the difference between the following two conditional probabilities
\begin{equation}
\mathrm{Adv}_{\mathrm{MI}}=\Pr[\mathcal{A}=0\mid b=0]-\Pr[\mathcal{A} \mid b=1] \label{advantage} 
\end{equation}
In \cite{BV08}, the prior probabilities associated with the competing hypotheses in the binary hypothesis test are not taken into account in defining the adversary's advantage. Instead, the adversary's advantage is defined as
\begin{equation}
\mathrm{Adv}(P_0,P_1) = 1 - (P_{\mathrm{miss}} + P_{\mathrm{fa}}),
\label{eq:adv}
\end{equation}
where $P_{\mathrm{miss}}$ and $P_{\mathrm{fa}}$ denote the probabilities of miss and false alarm, respectively, as defined in Section~\ref{subsec:hypo_test}.
The expressions in \eqref{advantage} and \eqref{eq:adv} are equivalent, as both quantify the difference between the true positive rate (i.e., the probability of correct detection) and the false positive rate (i.e., the false alarm probability). In particular, since
\begin{equation}
\Pr[\mathcal{A}=0|b=0]=1-\Pr[\mathcal{A}=1|b=0].
\end{equation} the two formulations coincide. In \cite{Shokri,Li13}, the training dataset $D_T$ is sampled from $P_0 \setminus P_1$, in contrast to \cite{Yeom18}, where the queried point $z$ is sampled directly from $P_0$. This distinction may provide the attacker with additional discriminatory information, potentially facilitating the identification of data points that are more likely to have been included in the training dataset.

The following theorem characterizes the optimal advantage achievable when distinguishing between two probability distributions.
\begin{theorem}[Best Advantage \cite{BV08,Crypta04}]
For two probability distributions $P_0$ and $P_1$ of finite support, the best advantage for choosing between $P_0$ and $P_1$ based on $q$ samples, denoted $BestAdv_q(P_0,P_1)$ is given by
\begin{equation}\label{best_adv}
\mathrm{BestAdv}_q(P_0,P_1)=1-2^{-q C(P_0,P_1)}
\end{equation}
\end{theorem} Here, $\mathrm{BestAdv}_q(P_0,P_1)$ represents the maximum possible difference between the power of the test (i.e., the probability of correct detection) and the type I error probability (false alarm rate), optimized over all decision rules.
In \cite{BV08}, the adversary's advantage is defined as in \eqref{eq:adv}, under the assumption that prior weights are not essential in the case of simple hypothesis testing. In this setting, each hypothesis uniquely determines the underlying distribution, unlike in composite hypothesis testing, where prior weights play a meaningful role.

\subsection{Bounding the Advantage\label{subsec:MIAbound}}

In this section, we first review existing upper bounds on the membership inference advantage from the literature and subsequently introduce a novel bound based on Chernoff differential privacy.
\begin{theorem}[\cite{Yeom18}]
Let $\mathcal{M}$ be an $\varepsilon$-differentially private learning algorithm. Then the adversary's membership advantage, $\mathrm{Adv}{\mathrm{MI}}$, is upper bounded as
\begin{equation}\label{Adv_upper1}
\mathrm{Adv}_{\mathrm{MI}}\leq \exp(\varepsilon)-1.
\end{equation}
\end{theorem}
The upper bound on the membership advantage of the adversary has been tightened in \cite{Erl20} down to 
\begin{equation} \label{Adv_upper2}
\mathrm{Adv}_{\mathrm{MI}}\leq 1-\exp(-\varepsilon)+\delta \exp(-\varepsilon)
\end{equation} for an $(\varepsilon,\delta)\textrm{-}$differentially private mechanism. 

The improvement of (\ref{Adv_upper2}) is achieved by combining (\ref{Adv_upper1}) with the proposition of Hall \textit{et al.} in \cite{Hall13} to upper bound the true positive rate as a function of the false positive rate and the privacy budget as
\begin{equation}\label{hall_power}
1-\beta \leq \exp(\varepsilon)\alpha+\delta
\end{equation} Namely, combining (\ref{Adv_upper1}) with (\ref{hall_power}) yields (\ref{Adv_upper2}).
We propose a novel upper bound on the adversary's membership advantage based on Chernoff differential privacy, and show that it coincides with (\ref{Adv_upper2}) in the special case $\delta = 0$.
Setting $P_0=P_{Y|X^n}$ and $P_1=P_{Y|\tilde{X}^n}$ allows us to upper bound the best advantage as a function of the privacy parameter. The resulting bound is stated in the following theorem.
\begin{theorem}[Main result II]\label{theo:MIA_cher}
The adversary's membership inference advantage against an $\varepsilon$-differentially private learning algorithm $\mathcal{M}$ satisfies
\begin{equation} \label{eq:cher_MIA}
\mathrm{Adv}_{\mathrm{MI}} \leq 1-\exp(-\varepsilon)
\end{equation} 
\end{theorem} 
\begin{proof}
Chernoff DP is defined using a natural logarithm. Plugging (\ref{def:C-DP}) into (\ref{best_adv}), in addition to changing the base of the logarithm yields the following original upper bound on the membership advantage
\begin{align} \label{eq:MIA_Chernoff}
\mathrm{Adv}_{\mathrm{MI}} &\leq 1-2^{-\varepsilon \log_2 e} \\
&= 1-\exp(-\varepsilon)
\end{align} for $q=1$ which is the bound given by (\ref{eq:cher_MIA}). 
\end{proof}

The tightest bound currently available in the literature is due to \cite{BayesSec}, and is given by
\begin{equation}
\mathrm{Adv}_{\mathrm{MI}}\leq \frac{\exp{\varepsilon}-1}{\exp{\varepsilon}+1}. \label{eq:bound_bayes}
\end{equation} The bound in (\ref{eq:bound_bayes}) corresponds to the alternative in \cite{Humphries23} for $\delta=0$. 
\subsection{Numerical Evaluation} Numerical comparison of the upper bounds introduced in Section \ref{subsec:MIAbound} are presented in Figures \ref{fig:MIA1}-\ref{fig:MIA2}. The overall privacy budget is taken as $n\varepsilon$ in order to illustrate the impact of composition on the attack performance.
In the figure legends, the original bound \eqref{Adv_upper1} from \cite{Yeom18} is labeled Yeom et al., while the bound in \eqref{Adv_upper2} is labeled Erlingsson for both $\delta=0$ and $\delta>0$. (\ref{eq:cher_MIA}) and (\ref{eq:bound_bayes}) are respectively represented by Chernoff and Bayes security in the legend. Accordingly, (\ref{Adv_upper1}) of \cite{Yeom18} is the loosest one in both scenarios. For the case of $\delta=0$, Chernoff and Erlingsson bounds coincide, yet remain outperformed by the Bayes security bound in (\ref{eq:bound_bayes}) in the high privacy regime for $\varepsilon \leq 4.5 $. For larger values of $\varepsilon$, this gap gradually closes. This effect is even more pronounced in Figure~\ref{fig:MIA2}, where the gap closes at smaller values of $\varepsilon$ (around $\varepsilon \geq 1.5$), reflecting the impact of composition under tighter privacy constraints.
\begin{figure}[ht] \label{figs:MIA12}
			\centering
			\includegraphics[width=0.75\linewidth]{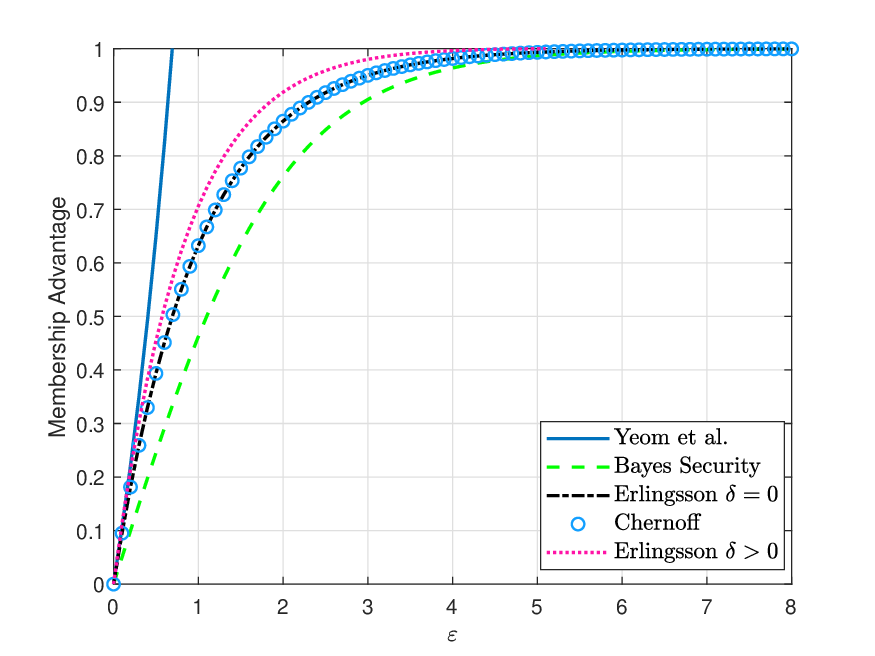}
			\caption{$n=1$}
			\label{fig:MIA1}
	\end{figure}
	\begin{figure}
			\centering
			\includegraphics[width=0.75\linewidth]{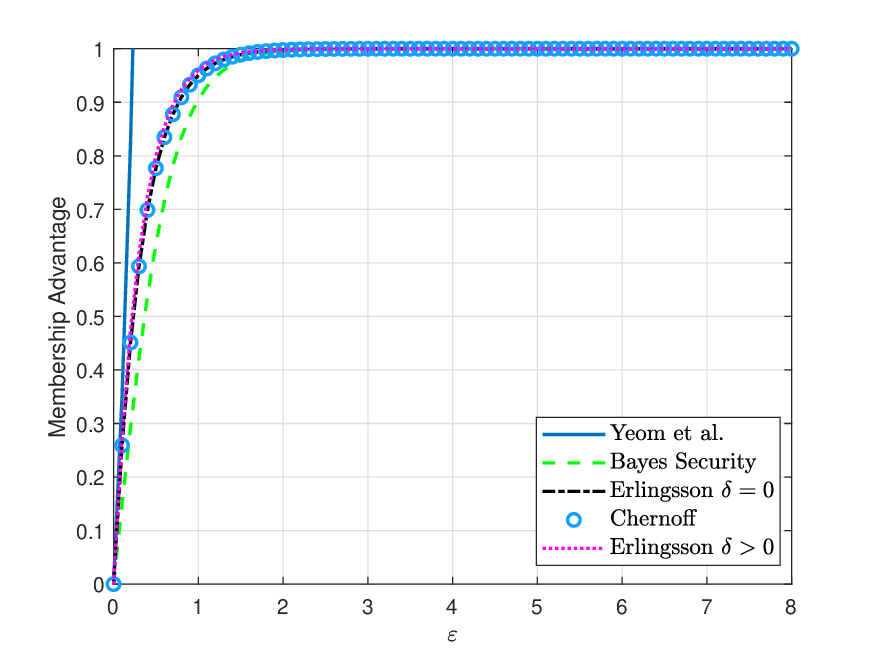}
			\caption{$n=3$}
			\label{fig:MIA2}
\end{figure}

\section{Conclusion \label{sec:conc}}

We studied Chernoff information-namely, the optimal average error exponent for binary classification-through its representation in terms of the Radon-Nikodym derivative, and analyzed its relationship with $\varepsilon$-differential privacy (DP) and KL-DP. We showed that $\varepsilon$-DP implies Chernoff DP, subject to a condition linking the privacy parameter $\varepsilon$ and the Chernoff parameter $\alpha$. This relationship introduces a corresponding constraint on the admissible privacy budget.
The optimal value of $\alpha$, which determines the best average error exponent for a given prior and privacy level $\varepsilon$, could not be obtained in closed form. Instead, we first derived an upper bound on the Radon-Nikodym derivative-based expression of the Chernoff information and then optimized this bound. Developing tighter bounding techniques to enable sharper characterizations remains an important direction for future work. Furthermore, we proved that Chernoff DP implies KL-DP, showing that Chernoff DP is sandwiched between $\varepsilon$-DP and KL-DP in the hierarchy of privacy metrics, ordered from strongest to weakest.

We also presented numerical comparisons between the Kullback-Leibler divergence and Chernoff information in the context of classification under Laplace mechanisms, as functions of the privacy budget, global sensitivity, and the magnitude of the adversarial perturbation. Notably, even when the magnitude of the attack is tripled relative to the query sensitivity, Chernoff information continues to satisfy the derived upper bound in the strong privacy regime, whereas the Kullback-Leibler divergence may violate it.

Finally, we proposed a novel theoretical upper bound on the adversary's membership inference advantage based on Chernoff DP and compared it numerically with existing bounds from the literature under various scenarios. The Chernoff-based bound converges the tightest known bound in the weaker privacy regime (i.e., for larger values of $\varepsilon$, approximately $\varepsilon \geq 4.5$). Future work will investigate additional membership inference settings that may benefit from the rapid convergence of Chernoff information with respect to the number of samples.

\bibliography{paper_adv_dp}

\begin{thebibliography}{10}
\providecommand{\url}[1]{#1}
\csname url@samestyle\endcsname
\providecommand{\newblock}{\relax}
\providecommand{\bibinfo}[2]{#2}
\providecommand{\BIBentrySTDinterwordspacing}{\spaceskip=0pt\relax}
\providecommand{\BIBentryALTinterwordstretchfactor}{4}
\providecommand{\BIBentryALTinterwordspacing}{\spaceskip=\fontdimen2\font plus
\BIBentryALTinterwordstretchfactor\fontdimen3\font minus
  \fontdimen4\font\relax}
\providecommand{\BIBforeignlanguage}[2]{{%
\expandafter\ifx\csname l@#1\endcsname\relax
\typeout{** WARNING: IEEEtran.bst: No hyphenation pattern has been}%
\typeout{** loaded for the language `#1'. Using the pattern for}%
\typeout{** the default language instead.}%
\else
\language=\csname l@#1\endcsname
\fi
#2}}
\providecommand{\BIBdecl}{\relax}
\BIBdecl

\bibitem{AdversarialML}
A.~Joseph, B.~Nelson, B.~Rubinstein, and J.~Tygar, \emph{Adversarial Machine
  Learning}.\hskip 1em plus 0.5em minus 0.4em\relax Cambridge: Cambridge
  University Press, 2018.

\bibitem{Shokri}
\BIBentryALTinterwordspacing
R.~Shokri, M.~Stronati, C.~Song, and V.~Shmatikov, ``Membership inference
  attacks against machine learning models,'' in \emph{2017 IEEE Symposium on
  Security and Privacy (SP)}.\hskip 1em plus 0.5em minus 0.4em\relax Los
  Alamitos, CA, USA: IEEE Computer Society, May 2017, pp. 3--18. [Online].
  Available: \url{https://doi.ieeecomputersociety.org/10.1109/SP.2017.41}
\BIBentrySTDinterwordspacing

\bibitem{Li13}
\BIBentryALTinterwordspacing
N.~Li, W.~Qardaji, D.~Su, Y.~Wu, and W.~Yang, ``Membership privacy: a unifying
  framework for privacy definitions,'' in \emph{Proceedings of the 2013 ACM
  SIGSAC Conference on Computer \& Communications Security}, ser. CCS
  '13.\hskip 1em plus 0.5em minus 0.4em\relax New York, NY, USA: Association
  for Computing Machinery, 2013, p. 889–900. [Online]. Available:
  \url{https://doi.org/10.1145/2508859.2516686}
\BIBentrySTDinterwordspacing

\bibitem{Humphries23}
\BIBentryALTinterwordspacing
T.~Humphries, S.~Oya, L.~Tulloch, M.~Rafuse, I.~Goldberg, U.~Hengartner, and
  F.~Kerschbaum, ``{ Investigating Membership Inference Attacks under Data
  Dependencies },'' in \emph{2023 IEEE 36th Computer Security Foundations
  Symposium (CSF)}.\hskip 1em plus 0.5em minus 0.4em\relax Los Alamitos, CA,
  USA: IEEE Computer Society, Jul. 2023, pp. 473--488. [Online]. Available:
  \url{https://doi.ieeecomputersociety.org/10.1109/CSF57540.2023.00013}
\BIBentrySTDinterwordspacing

\bibitem{MIA_survey}
\BIBentryALTinterwordspacing
H.~Hu, Z.~Salcic, L.~Sun, G.~Dobbie, P.~S. Yu, and X.~Zhang, ``Membership
  inference attacks on machine learning: A survey,'' \emph{ACM Comput. Surv.},
  vol.~54, no. 11s, Sep. 2022. [Online]. Available:
  \url{https://doi.org/10.1145/3523273}
\BIBentrySTDinterwordspacing

\bibitem{DMNA06}
C.~{D}work, F.~{M}cSherry, K.~{N}issim, and A.~{S}mith, ``Calibrating {N}oise
  to {S}ensitivity in {P}rivate {D}ata {A}nalysis,'' in \emph{Theory of
  Cryptography Conference}.\hskip 1em plus 0.5em minus 0.4em\relax
  International Association for Cryptologic Research, 2006, pp. 265--284.

\bibitem{CY16}
P.~{C}uff and L.~{Y}u, ``{D}ifferential {P}rivacy as a {M}utual {I}nformation
  {C}onstraint,'' in \emph{{CCS} 2016, {V}ienna, {A}ustria}.\hskip 1em plus
  0.5em minus 0.4em\relax New York, NY, United States: Association for
  Computing Machinery, Oct. 2016, pp. 43--54.

\bibitem{Renyidp}
I.~Mironov, ``Renyi differential privacy,'' 02 2017.

\bibitem{WYZ16}
W.~{W}ang, L.~{Y}ing, and J.~{Z}hang, ``On the relation between
  identifiability, differential privacy and mutual information privacy,''
  \emph{IEEE Transactions on Information Theory}, vol.~62, pp. 5018--5029, Sep.
  2016.

\bibitem{BK11}
G.~{B}arthe and B.~{K}\"opf, ``Information-theoretic bounds for differentially
  private mechanisms,'' in \emph{Computer Security Foundations
  Symposium}.\hskip 1em plus 0.5em minus 0.4em\relax New York, NY, USA: IEEE,
  2011, pp. 191--204.

\bibitem{M12}
D.~{M}ir, ``Information theoretic foundations of differential privacy,'' in
  \emph{International Symposium of Foundations on Practice of Security}.\hskip
  1em plus 0.5em minus 0.4em\relax Berlin, Heidelberg: Springer, Oct. 2012, pp.
  374--381.

\bibitem{UO2021}
A.~\"Unsal and M.~\"Onen, ``A statistical threshold for adversarial
  classification in laplace mechanisms,'' in \emph{2021 IEEE Information Theory
  Workshop (ITW)}, 2021, pp. 1--6.

\bibitem{UO22}
\BIBentryALTinterwordspacing
------, ``Calibrating the attack to sensitivity in differentially private
  mechanisms,'' \emph{Journal of Cybersecurity and Privacy}, vol.~2, no.~4, pp.
  830--852, 2022. [Online]. Available:
  \url{https://www.mdpi.com/2624-800X/2/4/42}
\BIBentrySTDinterwordspacing

\bibitem{AA11}
M.~S. Alvim, M.~E. Andr{\'e}s, K.~Chatzikokolakis, P.~Degano, and
  C.~Palamidessi, ``Differential privacy: On the trade-off between utility and
  information leakage,'' in \emph{Formal Aspects of Security and Trust}.\hskip
  1em plus 0.5em minus 0.4em\relax Berlin, Heidelberg: Springer Berlin
  Heidelberg, 2012, pp. 39--54.

\bibitem{Unsal_CS}
\BIBentryALTinterwordspacing
A.~\"{U}nsal and M.~\"{O}nen, ``Information-theoretic approaches to
  differential privacy,'' \emph{ACM Comput. Surv.}, vol.~56, no.~3, Oct. 2023.
  [Online]. Available: \url{https://doi.org/10.1145/3604904}
\BIBentrySTDinterwordspacing

\bibitem{JG1}
D.~Johnson, C.~Gruner, K.~Baggerly, and C.~Seshagiri, ``Information-theoretic
  analysis of neural coding,'' \emph{Journal of Computational Neuroscience},
  no.~10, pp. 47--69, 2001.

\bibitem{nov24}
N.~Novello and A.~M. Tonello, ``$f$-divergence based classification: Beyond the
  use of cross-entropy,'' 2024.

\bibitem{Duchi16}
\BIBentryALTinterwordspacing
J.~C. Duchi, K.~Khosravi, and F.~Ruan, ``Information measures, experiments,
  multi-category hypothesis tests, and surrogate losses,'' \emph{ArXiv}, vol.
  abs/1603.00126, 2016. [Online]. Available:
  \url{https://api.semanticscholar.org/CorpusID:13582051}
\BIBentrySTDinterwordspacing

\bibitem{Moreno}
P.~J. Moreno, P.~P. Ho, and N.~Vasconcelos, ``A kullback-leibler divergence
  based kernel for svm classification in multimedia applications,'' in
  \emph{Proceedings of the 16th International Conference on Neural Information
  Processing Systems}, ser. NIPS'03.\hskip 1em plus 0.5em minus 0.4em\relax
  Cambridge, MA, USA: MIT Press, 2003, p. 1385–1392.

\bibitem{Hero2002}
\BIBentryALTinterwordspacing
A.~O. Hero, B.~Ma, O.~J.~J. Michel, and J.~D. Gorman, ``Alpha-divergence for
  classification, indexing and retrieval (revised 2),'' 2002. [Online].
  Available: \url{https://api.semanticscholar.org/CorpusID:12727488}
\BIBentrySTDinterwordspacing

\bibitem{chernoff_fair}
\BIBentryALTinterwordspacing
A.~Nichani, H.~Hsu, Chun-Fu, Chen, and H.~Jeong, ``Does privacy always harm
  fairness? data-dependent trade-offs via chernoff information neural
  estimation,'' 2026. [Online]. Available:
  \url{https://arxiv.org/abs/2601.13698}
\BIBentrySTDinterwordspacing

\bibitem{chernoff_fair2}
A.~Nichani, H.~Hsu, and H.~Jeong, ``Can we catch the two birds of fairness and
  privacy?'' in \emph{International Conference on Learning Representations
  (ICLR), 2025 Advances in Financial AI Workshop}, 2025.

\bibitem{Zari22}
O.~Zari, J.~Parra-Arnau, A.~{\"U}nsal, T.~Strufe, and M.~{\"O}nen, ``Membership
  inference attack against principal component analysis,'' in \emph{Privacy in
  Statistical Databases}, J.~Domingo-Ferrer and M.~Laurent, Eds.\hskip 1em plus
  0.5em minus 0.4em\relax Cham: Springer International Publishing, 2022, pp.
  269--282.

\bibitem{Yeom18}
\BIBentryALTinterwordspacing
S.~Yeom, I.~Giacomelli, M.~Fredrikson, and S.~Jha, ``{ Privacy Risk in Machine
  Learning: Analyzing the Connection to Overfitting },'' in \emph{2018 IEEE
  31st Computer Security Foundations Symposium (CSF)}.\hskip 1em plus 0.5em
  minus 0.4em\relax Los Alamitos, CA, USA: IEEE Computer Society, Jul. 2018,
  pp. 268--282. [Online]. Available:
  \url{https://doi.ieeecomputersociety.org/10.1109/CSF.2018.00027}
\BIBentrySTDinterwordspacing

\bibitem{BayesSec}
\BIBentryALTinterwordspacing
K.~Chatzikokolakis, G.~Cherubin, C.~Palamidessi, and C.~Troncoso, ``Bayes
  security measure,'' Dec. 2020, working paper or preprint. [Online].
  Available: \url{https://inria.hal.science/hal-03091416}
\BIBentrySTDinterwordspacing

\bibitem{Erl20}
\BIBentryALTinterwordspacing
\'{U}lfar Erlingsson, I.~Mironov, A.~Raghunathan, and S.~Song, ``That which we
  call private,'' 2020. [Online]. Available:
  \url{https://arxiv.org/abs/1908.03566}
\BIBentrySTDinterwordspacing

\bibitem{Hall13}
R.~Hall, A.~Rinaldo, and L.~Wasserman, ``Differential privacy for functions and
  functional data,'' \emph{J. Mach. Learn. Res.}, vol.~14, no.~1, p. 703–727,
  Feb. 2013.

\bibitem{Prob_Measure}
P.~Billingsley, \emph{Probability and Measure}.\hskip 1em plus 0.5em minus
  0.4em\relax New York: Wiley, 1995.

\bibitem{Nikodym}
\BIBentryALTinterwordspacing
O.~Nikodym, ``Sur une g\'en\'eralisation des int\'egrales de m. j. radon,''
  \emph{Fundamenta Mathematicae}, vol.~15, no.~1, pp. 131--179, 1930. [Online].
  Available: \url{http://eudml.org/doc/212339}
\BIBentrySTDinterwordspacing

\bibitem{DR05}
C.~{D}work and A.~{R}oth, ``The {A}lgorithmic {F}oundations of {D}ifferential
  {P}rivacy,'' \emph{Foundations and Trends in Theoretical Computer Science
  2014}, vol.~9, pp. 211--407, 2014.

\bibitem{Kullback}
S.~Kullback, \emph{Information Theory and Statistics}.\hskip 1em plus 0.5em
  minus 0.4em\relax New York: Wiley, 1959.

\bibitem{KL}
S.~Kullback and R.~Leibler, ``On information and sufficiency,'' \emph{Annals of
  Mathematical Statistics}, vol.~22, 1951.

\bibitem{Chernoff}
H.~Chernoff, ``A measure of asymptotic efficiency for tests of a hypothesis
  based on the sum of observations,'' \emph{Annals of Mathematical Statistics},
  vol.~23, pp. 493--507, 1952.

\bibitem{r14}
T.~Cover and J.~A. Thomas, \emph{Elements of Information Theory}.\hskip 1em
  plus 0.5em minus 0.4em\relax Wiley Series in Telecommunications, 1991.

\bibitem{DKMMN}
C.~Dwork, K.~Kenthapadi, F.~McSherry, I.~Mironov, and M.~Naor, ``Our data,
  ourselves: Privacy via distributed noise generation,'' in \emph{Advances in
  Cryptology - EUROCRYPT 2006}, S.~Vaudenay, Ed.\hskip 1em plus 0.5em minus
  0.4em\relax Berlin, Heidelberg: Springer Berlin Heidelberg, 2006, pp.
  486--503.

\bibitem{KOV15}
P.~{K}airouz, S.~{O}h, and P.~{V}iswanath, ``The composition theorem for
  differential privacy,'' in \emph{32nd International Conference on Machine
  Learning}.\hskip 1em plus 0.5em minus 0.4em\relax JMLR, Inc. and Microtome
  Publishing (United States), 2015, pp. 4037--4049.

\bibitem{JS03}
D.~Johnson and S.~Sinanovic, ``Symmetrizing the kullback-leibler distance,'' 02
  2003.

\bibitem{BV08}
T.~Baign{\`e}res and S.~Vaudenay, ``The complexity of distinguishing
  distributions (invited talk),'' in \emph{Information Theoretic Security},
  R.~Safavi-Naini, Ed.\hskip 1em plus 0.5em minus 0.4em\relax Berlin,
  Heidelberg: Springer Berlin Heidelberg, 2008, pp. 210--222.

\bibitem{Crypta04}
T.~Baign{\`e}res, P.~Junod, and S.~Vaudenay, ``How far can we go beyond linear
  cryptanalysis?'' in \emph{Advances in Cryptology - ASIACRYPT 2004}, P.~J.
  Lee, Ed.\hskip 1em plus 0.5em minus 0.4em\relax Berlin, Heidelberg: Springer
  Berlin Heidelberg, 2004, pp. 432--450.

\end{thebibliography}
\bibliographystyle{IEEEtran}

\end{document}